\newcommand{\LP}{(LP)\xspace}
\renewcommand{\epsilon}{\varepsilon}
\renewcommand{\phi}{\varphi}
\newcommand{\Odd}{Odd}
\DeclareMathOperator\dist{dist}
\DeclareMathOperator\LPopt{LP}
\newcounter{mnote}[section]
\definecolor{darkgreen}{rgb}{0.2,0.8,0.55}
\definecolor{ForestGreen}{rgb}{0.0333,0.4451,0.0333}
\definecolor{DarkRed}{rgb}{0.65,0,0}
\definecolor{Red}{rgb}{1,0,0}
\tikzset{->-/.style={decoration={markings,mark=at position .5 with {\arrow{>}}},postaction={decorate}}}
\tikzset{vertex/.style={draw,circle,inner sep=0pt,minimum size=18pt},>=latex'}
\tikzset{c/.style={draw,circle,inner sep=0pt,minimum size=15pt},>=latex'}
\tikzset{dot/.style={draw,circle,inner sep=0pt,minimum size=3pt, fill=black},>=latex'}
\newtheorem{corollary}{Corollary}
\newtheorem{theorem}{Theorem}[section]
\newtheorem{lemma}[theorem]{Lemma}
\let\emptyset\varnothing
\let\cref\Cref
\title{Approximating Graphic Multi-Path TSP and Graphic Ordered TSP}
\author{
    \begin{tabular}[t]{c@{\extracolsep{3em}}c@{\extracolsep{3em}}c}
        Morteza Alimi & Niklas Dahlmeier & Tobias Mömke \\
        \small University of Augsburg & \small RWTH Aachen & \small University of Augsburg \\
        \\
        \multicolumn{3}{c}{
            \begin{tabular}{@{}c@{\extracolsep{3em}}c@{}}
                Philipp Pabst & Laura Vargas Koch \\
                \small RWTH Aachen & \small RWTH Aachen
            \end{tabular}
        }
    \end{tabular}
}
\date{ }
\begin{document}
\maketitle
\begin{abstract}
The path version of the Traveling Salesman Problem is one of the most well-studied variants of the ubiquitous TSP. Its generalization, the Multi-Path TSP, has recently been used in the  best known algorithm for path TSP by Traub and Vygen [Cambridge University Press, 2024]. The best known approximation factor for this problem is $2.214$ by B\"{o}hm, Friggstad, M\"{o}mke and Spoerhase {[}SODA 2025{]}.
In this paper we show that for the case of graphic metrics, a significantly better approximation guarantee of $2$ can be attained. Our algorithm is based on sampling paths from a decomposition of the flow corresponding to the optimal solution to the LP for the problem, and connecting the left-out vertices with doubled edges. The cost of the latter is twice the optimum in the worst case; we show how the cost of the sampled paths can be absorbed into it without increasing the approximation factor.    
Furthermore, we prove that any below-$2$ approximation algorithm for the special case of the problem where each source is the same as the corresponding sink 
yields a below-$2$ approximation algorithm for Graphic Multi-Path TSP.

We also show that our ideas can be utilized to give a factor $1.791$-approximation algorithm for Ordered TSP in graphic metrics, 
for which the aforementioned paper {[}SODA 2025{]} and Armbruster, Mnich and Nägele {[}APPROX 2024{]} give a $1.868$-approximation algorithm in general metrics.

\end{abstract}

\section{Introduction}
 The Traveling Salesman Problem (TSP) and its natural variants are among the most fundamental and significant problems in combinatorial optimization. 
In the classical version of TSP, we are given an undirected complete graph $G=(V,E)$ with nonnegative edge weights $c\colon E \rightarrow \mathbb{R}_{\ge 0}$, satisfying the triangle inequality.\footnote{The original formulation of TSP allows arbitrary cost functions. From an approximation algorithms perspective, however, the triangle inequality is crucial and it will be satisfied for all problems studied in this paper.}
The goal is then to find a shortest tour visiting all vertices.
Here, we focus on the Graphic TSP and the Graphic Multi-Path TSP. 
Graphic TSP is an important special case of TSP, where 
$G=(V,E)$ is a connected graph with $c(e) = 1$ for each edge $e \in E$ and we consider the metric completion of $G$.
Multi-Path TSP is a generalization of the classical TSP problem.
In the Multi-Path TSP, we are given a TSP instance and additionally a set of $k$ commodities, each represented by a vertex tuple $\{(s_1, t_1), (s_2, t_2), \ldots, (s_k, t_k)\}$. 
The task is to find a path $P_i$ from $s_i$ to $t_i$ for each $i \in [k]$ such that every vertex $v \in V$ is visited at least once by some path $P_i$. 
In this paper, we study the Graphic Multi-Path TSP, which means that we solve the Multi-Path TSP on a graphic instance. 

Our main result is the following Theorem~\ref{thm:main}.
\begin{restatable}{theorem}{maintheorem}\label{thm:main}
    There is a $2$-approximation algorithm for Graphic Multi-Path TSP.
\end{restatable}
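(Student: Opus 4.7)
I would start from the natural cut/flow LP relaxation of Graphic Multi-Path TSP: variables $x_e \ge 0$ bound the usage of edge $e \in E$, and for every commodity $i \in [k]$ a unit $s_i$--$t_i$ flow $f^i$ is supported by $x$; coverage constraints force the total cut around every vertex $v$ (in the flow $\sum_i f^i$) to be sufficiently large so that $v$ is visited fractionally. Let $\LPopt$ denote the optimum of this LP, so $\LPopt \le \opt$. Recall also the graphic lower bound $\opt \ge n - k$ (total number of edges in any feasible collection of $k$ paths covering $V$).

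The algorithm then proceeds in two stages. First, for each commodity $i$, decompose the flow $f^i$ into a convex combination $\sum_j \lambda_{i,j}\, \mathbf{1}_{P_{i,j}}$ of integral $s_i$--$t_i$ paths in $G$ and sample a single path $P_i$ according to the weights $\lambda_{i,\cdot}$ (independently across commodities). By linearity, the expected total length is $\mathbb E\big[\sum_i |P_i|\big] = \sum_e x_e^* = \LPopt$. Second, to repair coverage, augment $\bigcup_i P_i$ by a set $T$ of \emph{doubled} edges so that the resulting multigraph touches every vertex and, together with each sampled path, admits an Eulerian extraction of the final $s_i$--$t_i$ paths: concretely, contract every sampled path to a supernode in $G$ and let $T$ be a spanning tree of the contracted graph, each edge taken with multiplicity two.

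The main obstacle — and the step I expect to be delicate — is the cost analysis. A naive bound gives only $\mathbb{E}[\sum_i |P_i|] + 2|T| \le \LPopt + 2(n-1) \le 3\,\opt$, which is why the abstract emphasizes that the paths' cost must be \emph{absorbed} into the doubling cost rather than paid on top of it. I would attempt the absorption by arranging the two stages so that, edge by edge, the expected total contribution (sampled paths plus doubled tree) is at most $2x_e^*$. The natural route is to choose $T$ only inside $G$ and only on edges that the sampled paths do not already traverse, and to charge each doubled edge against the unused LP slack $2x_e^* - \Pr[e \in \bigcup_i P_i]$ at that edge. This needs the sampling to be compatible with the graphic LP support — in particular, that vertices missed by $\bigcup_i P_i$ can always be reached through edges with enough residual LP weight, which is where the graphic assumption becomes essential.

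If that edge-wise domination can be established, summing yields $\mathbb{E}[\sum_i |P_i| + 2|T|] \le 2\LPopt \le 2\,\opt$, and derandomizing the sampling (or selecting the best outcome over a polynomial number of trials) gives a deterministic $2$-approximation. I would expect the write-up to split the argument into a clean lemma about decomposing and sampling an LP flow in graphic metrics, a second lemma that bounds the expected doubled-tree cost in terms of the LP slack, and a short combinatorial lemma producing the actual $s_i$--$t_i$ paths from the Eulerian multigraph by shortcutting.
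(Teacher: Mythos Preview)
Your algorithm is essentially the one in the paper: solve the LP, decompose each commodity's flow, sample one path per commodity with probabilities given by the decomposition, then reconnect the missed vertices with doubled edges. The gap is entirely in the analysis. The edge-by-edge domination you aim for, namely that the expected total usage of $e$ is at most $2x_e^*$, is false: the edges you add to reconnect an uncovered vertex need not lie in the LP support at all, and even if you insist on LP-supported edges there is no guarantee that an uncovered vertex has a neighbor in the already-covered set through such an edge. The ``residual slack'' $2x_e^* - \Pr[e\in\bigcup_i P_i]$ can therefore be zero (or undefined) precisely on the edges you need for reconnection, so the charging scheme cannot be carried out per edge.

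The paper's analysis is \emph{vertex-wise}, and this is the idea you are missing. The cost is accounted at each vertex $v$ as the expected number of sampled arcs leaving $v$ plus the expected reconnection cost charged to $v$. The first is $z_v^P \coloneqq \sum_i \sum_{j:\, v\in V(P_j^i)} \lambda_j^i$; the second is $2\cdot\Pr[v\text{ not covered}] \le 2\prod_i(1-z_{i,v}^P) \le 2e^{-z_v^P}$, independently of which graph edge is used to reconnect $v$ (this is where graphicness enters: reconnecting any missed vertex costs exactly~$2$). The entire proof then reduces to the scalar inequality $z + 2e^{-z} \le \max\{2,\,2z\}$ for $z\ge 0$, which together with the LP coverage constraint $z_v\ge 1$ for non-terminals gives $z_v^P + 2e^{-z_v^P} \le 2z_v$ and hence total expected cost at most $2\sum_v z_v = 2\,\LPopt$. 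In short, the absorption of the sampling cost into the doubling cost happens per vertex, via this convex one-variable bound, not per edge via LP slack.
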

We present a simple algorithm with a short but interesting analysis.
In the algorithm, we first solve a linear programming relaxation of a Multi-Path TSP IP. The resulting fractional solution is a flow which can be decomposed into paths and cycles. We then sample one path per commodity with probabilities determined by the decomposition. In the last step, we connect each of the so far unconnected vertices for a cost of two (exploiting the fact that we are in the graphic setting).
As a corollary, our algorithm shows that the integrality gap of the mentioned LP is bounded from above by $2$.

For the graphic case, our algorithm improves on the best known approximation algorithm for the general Multi-Path TSP \cite{bohm2025approximating} which has an approximation factor of $1+2 e^{-1/2} < 2.214$. 
Both algorithms use similar approaches, building on the same LP formulation. The crucial difference is that instead of sampling a path, the general version from \cite{bohm2025approximating} uses the Bang-Jensen Lemma \cite{bang1995preserving,post2014linear} to sample an arborescence for each commodity (see also \cite{AMR24, AMN24, BKN24,BN23,FS17} for applications of the lemma in the context of TSP). This technical framework is used to ensure that every vertex is sampled with the same probability.  
In the general case, this is needed to bound the reconnection cost, however, it has the disadvantage that a costly parity correction is required. 

Conceptually, Multi-Path TSP consists of two parts. First, we have to connect $s_i$ and $t_i$ for each commodity $i \in [k]$ and second we have to cover all the vertices. 
These two problems are closely connected. The more vertices are already covered by the paths, the fewer vertices have to be connected in the second step. At the same time, however, we want to avoid covering the same vertex too often by different paths.
It will turn out that after choosing a reasonable set of paths, it is easy to cover the remaining vertices.

Furthermore, we provide a $(1+\frac{e}{2e-2}) < 1.791$-approximation algorithm for Graphic Ordered TSP, which can be interpreted as a special case of Graphic Multi-Path TSP. 
While the overall approach is the same approach as in \Cref{thm:main}, we use a different parity correction argument. 
In the Graphic Multi-Path TSP algorithm, doubling the reconnection edges can be considered as a parity correction step. In the Graphic Ordered TSP algorithm, we can choose between this option and the classical parity correction argument, via a join of the odd degree vertices.
This improves on the best known approximation factor for general Ordered TSP which is $\frac{3}{2}+\frac{1}{e} <  1.868$, see \cite{AMN24, bohm2025approximating}. 
\begin{restatable}{theorem}{orderedtsptheorem}\label{thm:ordered}
    There is a $(1+\frac{e}{2e-2})$-approximation algorithm for Graphic Ordered TSP. 
\end{restatable}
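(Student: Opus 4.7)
The plan is to reuse the three-step scheme from \Cref{thm:main}---solve the LP, sample one path per commodity from the flow decomposition, reconnect the uncovered vertices---and to exploit the fact that, for Ordered TSP, the commodities $(s_i,t_i)=(v_i,v_{i+1})$ form a chain, so the union $H=\bigcup_i P_i$ of sampled paths is a single walk from $v_1$ to $v_m$ instead of an arbitrary collection of walks. It is this chain structure that enables a cheaper alternative to the doubling-based reconnection used in \Cref{thm:main}.

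After sampling, let $U\subseteq V$ be the vertices not covered by $H$. Since the instance is graphic and $G$ is connected, we can attach $U$ to $V(H)$ using a set $T$ of exactly $|U|$ unit-cost edges of $G$ (obtained, for instance, by contracting $V(H)$ inside a spanning tree of $G$). We then prepare two candidate ordered tours and output the cheaper. The first is the doubling variant from \Cref{thm:main}: $H\cup 2T$ inherits the parities of $H$ (odd only at $v_1,v_m$), so it admits an Eulerian walk from $v_1$ to $v_m$ which shortcuts to an ordered tour of cost at most $X+2|U|$, where $X=|E(H)|$. The second is the classical Christofides-style parity correction: take $H\cup T$ and add a minimum $T$-join $J$ on the vertices whose parity in $H\cup T$ differs from the target set $\{v_1,v_m\}$; the standard tour-splitting argument bounds $|J|$ by $\OPT/2$ in any metric, so the resulting ordered tour has cost at most $X+|U|+\OPT/2$.

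The ratio analysis then reduces to bounding $E\bigl[X+|U|+\min(|U|,\OPT/2)\bigr]$ against $\OPT$. By the same LP flow-decomposition argument used in \Cref{thm:main}, $E[X]$ is at most the LP value and hence at most $\OPT$, and each vertex $v$ is uncovered with probability at most $\prod_i(1-f^i(v))\le e^{-y_v}$, where $y_v$ denotes the total fractional flow across commodities through $v$. Combining this bound with the LP coverage constraints and with the graphic lower bound $n-1\le\OPT$ yields a quantitative bound on $E[|U|]$ in terms of $\OPT$; substituting and optimizing the trade-off between the doubling regime (cheap when $|U|$ is small) and the $T$-join regime (cheap when $|U|$ is large) produces the stated ratio $1+\tfrac{e}{2e-2}$.

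The main obstacle I anticipate is this last optimization step: one must verify that, in the worst case, the pointwise minimum of the two bounds precisely meets the claimed constant, which hinges on pinning down the right quantitative bound for $E[|U|]$ in terms of $\OPT$ (rather than merely in terms of $n$) and balancing it against $\OPT/2$. A smaller technical point specific to the ordered setting is to ensure that the $T$-join is defined with respect to $\{v_1,v_m\}$ so that, after shortcutting, the resulting walk still visits the waypoints $v_1,\ldots,v_m$ in the prescribed order; this requires assigning the $T$ and $J$ edges to the appropriate subpaths of the chain.
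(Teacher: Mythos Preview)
Your algorithm coincides with the paper's: sample one path per commodity from the LP flow decomposition, attach each uncovered vertex by a single edge, then fix parity either by doubling those edges or via a minimum $\Odd$-join bounded by $\tfrac12\LPopt$ (the paper actually computes a single minimum $\Odd$-join and, in the analysis, bounds it by a convex combination of these two candidates, which is equivalent to your $\min$).

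The gap is exactly where you flag it, and the route you sketch---bounding $E[|U|]$ globally via the LP constraints together with $n-1\le\OPT$---does not close it. The issue is that $E[X]$ and $E[|U|]$ are coupled vertex by vertex and cannot be controlled separately. Writing $z_v^P$ for the total \emph{path}-flow through $v$ (not the total flow $z_v$: cycles in the decomposition are never sampled, so only path-flow governs coverage), a non-terminal $v$ contributes $z_v^P$ to $E[X]$ and at most $e^{-z_v^P}$ to $E[|U|]$. Since $z_v^P$ may be $0$ even when $z_v\ge 1$ (all coverage via cycles), a separate bound on $E[|U|]$ can be as bad as $n-k$, and combining with $E[X]\le\LPopt$ only recovers a factor~$2$.

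The paper's fix is a per-vertex estimate. For fixed $\alpha\in[0,1]$ the expected cost is at most
\[
\sum_{v\notin S\cup T}\bigl(z_v^P+(1+\alpha)e^{-z_v^P}\bigr)\;+\;\sum_{v\in S\cup T}z_v^P\;+\;(1-\alpha)\cdot\tfrac{\LPopt}{2}.
\]
The function $g(t)=t+(1+\alpha)e^{-t}$ is convex, so $g(z_v^P)\le\max\{g(0),g(1)\}\cdot\max\{1,z_v^P\}$; now the LP coverage constraint $z_v\ge 1$ (this is the ingredient you need, not $n-1\le\OPT$) lets you replace $\max\{1,z_v^P\}$ by $z_v$. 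Summing turns everything into $\max\{g(0),g(1)\}\cdot\LPopt+(1-\alpha)\cdot\tfrac{\LPopt}{2}$, and choosing $\alpha=\tfrac{1}{e-1}$ equalizes $g(0)=g(1)=\tfrac{e}{e-1}$, producing exactly $1+\tfrac{e}{2e-2}$.

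Two minor remarks. Ordered TSP here is cyclic ($o_{k+1}=o_1$), so the union of sampled paths is a closed walk rather than an open $v_1$--$v_m$ walk; this only simplifies your parity discussion. And keep $z_v^P$ and $z_v$ distinct throughout: the former controls the covering probability, while only the latter is forced to be at least~$1$ by the LP.
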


Lastly, in \Cref{sec:reduction}, we relate Graphic Multi-Path TSP to the Graphic Uncapacitated Multi-Depot VRP. This is a special case of Graphic Multi-Path TSP, where for each commodity $i \in [k]$ we have $s_i=t_i$.
By doubling a terminal-rooted spanning forest, we easily obtain a $2$-approximation for this problem.
We show that any improvement in the approximation ratio of $2$ for Graphic Uncapacitated Multi-Depot VRP leads to an improved approximation ratio for Graphic Multi-Path TSP. Intuitively, this holds since \Cref{alg:main} is bad in the case where the average distance between $s_i$ and $t_i$ is small and the reconnection step incurs high cost. 
\begin{restatable}{theorem}{vrp}\label{thm:vrp}
Suppose there is a $(2-\delta)$-approximation algorithm $\text{ALG}_{\text{VRP}}$ for Graphic Uncapacitated Multi-Depot VRP, for some constant $\delta>0$. 
Then there is also an approximation algorithm with factor
$2-f(\delta)$ for Graphic Multi-Path TSP, where $f$ is a strictly positive function. 
\end{restatable}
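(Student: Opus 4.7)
The plan is to run two algorithms on the given Graphic Multi-Path TSP instance $I$ and output the cheaper solution. The first is \Cref{alg:main} itself; the second reduces $I$ to a Graphic UM-VRP instance and applies $\text{ALG}_{\text{VRP}}$. The key parameter is $D := \sum_{i=1}^{k}\dist(s_i,t_i)$. The intuition, as stated after the theorem, is that \Cref{alg:main} is tight only when its reconnection step dominates, which forces $D$ to be small, whereas contracting each source--sink pair into a single depot loses little exactly when $D$ is small. Throughout, write $\OPT$ for the optimum of $I$; recall $\OPT\geq D$.

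For the VRP-based algorithm, I would construct the instance $I'$ on the same graph by replacing each commodity $(s_i,t_i)$ by $(s_i,s_i)$. Extending each optimal $s_i$-$t_i$ path by a shortest walk from $t_i$ back to $s_i$ gives a feasible VRP solution, hence $\OPT_{\text{VRP}}(I') \leq \OPT + D$. Conversely, any VRP solution with closed tours $C_i$ yields an MP-TSP solution of cost at most (VRP cost) $+\,D$ by appending a shortest $s_i$-$t_i$ walk to each $C_i$. Applying $\text{ALG}_{\text{VRP}}$ therefore produces an MP-TSP solution of cost at most
\[
(2-\delta)(\OPT+D)+D \;=\; (2-\delta)\OPT+(3-\delta)D.
\]

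For \Cref{alg:main}, I would sharpen the analysis of \Cref{thm:main} as follows. Writing its expected cost as $L+2R$, where $L$ is the expected total length of the sampled paths and $R$ is the expected number of uncovered vertices, we have $L\geq D$ because each sampled path connects $s_i$ to $t_i$. The existing proof absorbs $L$ into a $2\OPT$ bound on reconnection; by additionally charging the $D$ units of mandatory sampled length against the uncovered-vertex count -- each graphic sampled edge covers a vertex that then needs no reconnection -- a per-commodity double counting should yield
\[
L + 2R \;\leq\; 2\OPT - D.
\]
As a sanity check, the worst case of the original analysis has $s_i=t_i$ for every $i$ (and hence $D=0$), exactly when the refinement contributes nothing.

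Combining, the overall algorithm outputs a solution of cost at most
\[
\min\bigl(2\OPT-D,\;(2-\delta)\OPT+(3-\delta)D\bigr).
\]
Parametrising by $y:=D/\OPT\in[0,1]$, the two bounds become $(2-y)\OPT$ and $\bigl((2-\delta)+(3-\delta)y\bigr)\OPT$; these cross at $y^{\star}=\delta/(4-\delta)$, where they both equal $\bigl(2-\delta/(4-\delta)\bigr)\OPT$. Setting $f(\delta):=\delta/(4-\delta)>0$ completes the proof. The principal obstacle is the refined analysis of \Cref{alg:main}: the existing argument is tailored to absorb $L$ cleanly into $2\OPT$, so extracting the additional $-D$ savings requires a finer, per-commodity accounting of how the mandatory distance $d(s_i,t_i)$ trims the reconnection burden. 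A possible fallback, should this refinement resist, is to design a tighter VRP reduction on a modified graphic graph -- for instance, with per-commodity auxiliary depot vertices adjacent to both $s_i$ and $t_i$ -- that avoids the additive $+D$ penalty when converting VRP tours back to MP-TSP paths.
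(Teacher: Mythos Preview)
Your overall architecture---run \Cref{alg:main} and a VRP-based algorithm, return the cheaper---matches the paper, and your bound $(2-\delta)\OPT + (3-\delta)D$ for the VRP route is correct. The gap is the claimed refinement $L + 2R \le 2\OPT - D$ for \Cref{alg:main}: this inequality is false. Take the $4$-cycle $s\,v_1\,t\,v_2\,s$ with commodities $(s,t)$ and $(t,s)$. Here $D = \OPT = 4$, so your bound would force the algorithm to be optimal in expectation. But for the optimal LP solution that splits each commodity's unit of flow evenly between the two length-$2$ paths, each of $v_1,v_2$ is missed with probability $\tfrac14$, giving expected cost $L+2R = 4 + 2\cdot\tfrac12 = 5 > 4$. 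More generally, whenever $D=\OPT$ your bound asserts the sampling algorithm is exact, which it plainly is not. The heuristic ``each sampled edge covers a vertex that then needs no reconnection'' double-counts: when several commodities' paths overlap, $D$ units of mandatory length cover far fewer than $D$ distinct vertices, and the resulting saving in $R$ does not scale with $D$.

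The paper sidesteps this by \emph{not} aiming for a clean $-D$ improvement. When $D \le \tfrac{\delta}{4}\OPT$ it uses the VRP route (as you do). When $D$ is large it runs a case analysis on the LP solution: either the terminals already carry $\Omega(\delta)\cdot\OPT$ of the mass (and the last line of \eqref{eq:cost_of_tour} gives the saving for free), or $\Omega(\delta)\cdot\OPT$ many non-terminal vertices have $z_v^P$ bounded away from $0$ (each contributing a constant saving over the crude bound $2$), or else the excess $\sum_{v:\,z_v^P\ge 1}(z_v^P-1)$ must be $\Omega(\delta)\cdot\OPT$ (each such vertex saves $z_v^P-1$). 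Each branch yields only a $\Theta(\delta)\cdot\OPT$ improvement, not the full $D$, but that suffices for a $2-f(\delta)$ guarantee. Your fallback of engineering a VRP reduction without the $+D$ conversion loss is also not what the paper does, and would need its own argument.
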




\subsection{Further Related Work}

The TSP and its variants are well-studied from both a theoretical and a practical perspective. In this work, we will focus on approximation results and refer to the book \cite{Traub_Vygen_2024} for an exhaustive and up-to-date overview of existing results on many different variants of TSP.

For over 40 years, the best known approximation ratio for TSP was $3/2$ as achieved by the algorithm of Christofides~\cite{Christofides1976WorstCaseAO} and by Serdjukov~\cite{Ser78}. This was improved to $3/2-\epsilon$ \cite{10.1145/3406325.3451009} for some $\epsilon > 10^{-36}$. 
However, for the graphic case better approximation algorithms are known. 
Oveis Gharan, Saberi, and Singh gave a slight improvement over 3/2~\cite{gharan2011randomized},
shortly followed by the 1.461-approximation algorithm of
Mömke and Svensson~\cite{momke2011approximating}.
This was further improved first by Mucha~\cite{mucha2014approximation} and then by Seb\H{o} and Vygen~\cite{sebHo2014shorter} to 1.4.
By applying a reduction from Multi-Path TSP to TSP for a constant number of terminal pairs \cite{Traub_Vygen_2024,traub2021reducing}, an $\alpha$-approximation algorithm for Graphic TSP implies an $(\alpha+\epsilon)$-approximation algorithm Graphic Multi-Path TSP if the number of terminal pairs is bounded by a constant.
For the prize-collecting versions of Multi-Path TSP and Ordered TSP, where terminal pairs may be omitted for a penalty cost, there are a 2.41-approximation algorithm and a 2.097-approximation algorithm, respectively~\cite{AMR24}.

\section{An LP-Formulation}

The Graphic Multi-Path TSP combines Multi-Path TSP and Graphic TSP. Formally, the problem is defined as follows:\\[1em]
\textsc{Graphic Multi-Path TSP}\vspace*{-0.2cm}
\begin{description}
  \item[\textbf{Input:}] We are given an undirected graph $G = (V,E)$ and a list of pairwise distinct source-sink-pairs $\{(s_1, t_1), (s_2, t_2), \ldots, (s_k, t_k)\}$ with $s_i, t_i \in V$ for all $i \in [k]$.
    \vspace*{-0.2cm}
  \item[\textbf{Question:}] For each pair $(s_i, t_i)$, find an $s_i$-$t_i$ walk $P_i$ such that $\bigcup_{i\in[k]}P_i = V$, and the total number of edges used by walks $\sum_{i\in[k]} |P_i|$ is minimized.
\end{description}
\vspace{1em}
 Note that we will not work on the metric completion but on the original graph throughout the whole paper. Thus, we can denote the cost of path $P$ by $c(P)\coloneqq \sum_{e \in P} 1 =|P|$, so every edge contributes exactly 1.

Throughout the paper, we will assume that a given instance is feasible. This means every $s_i$ is connected to $t_i$ and every vertex $v \in V$ is connected to at least one $s_i$ for all $i \in [k]$. This assumption is without loss of generality, as we can detect infeasibility by checking for the necessary connectedness in polynomial time. Moreover, we assume without loss of generality that $G$ is connected, as each connected component forms an independent subinstance.
We reformulate the problem in an equivalent way by bidirecting all edges. This yields a newly constructed graph $G'=(V,A)$.
As we assume that $G$ is connected, $G'$ is strongly connected. 
We will work with $G'$ rather than $G$ for the remainder of this paper. Given a solution for this bidirected version of the problem on $G'$, we can turn it back into a solution for the original problem on $G$ by removing all edge directions.

We now consider the following LP formulation 
for the (Graphic) Multi-Path TSP on $G'$. 
For a set of vertices $U$, we define $\delta^-(U) \coloneqq \{a = (u,v) \in A \colon u \notin U \text{ and } v \in U\}$, $\delta^+(U) \coloneqq \delta^-(V\setminus U)$, and for a vertex $v \in V$, we use the shorthand notation $\delta^-(v) \coloneqq \delta^-(\{v\})$ and $\delta^+(v) \coloneqq \delta^+(\{v\})$. Furthermore, we define
$T \coloneqq \{t_i \colon i \in [k]\}$ and $S \coloneqq \{s_i \colon i \in [k]\}$. 

Let $x_{i,a}$ be the flow on arc $a$ via commodity $i$, and $x_i(F) \coloneqq \sum_{a \in F} x_{i,a}$ for any set of arcs $F \subseteq A$.
Moreover, let $z_{i,v}$ be the total outflow of vertex $v$ of commodity $i$ and let $z_v \coloneqq \sum_{i \in [k]} z_{i,v}$ be the total outflow of vertex $v$.

\begin{alignat}{3} \label{LP}
 & \text{minimize} & \sum_{i \in [k]} \sum_{a \in A} x_{i,a}&&& \\
 & \text{s.~t.} \quad& x_i(\delta^{+}(v)) &= x_i(\delta^{-}(v)), & \quad &\forall i \in [k],v \in V \setminus (\{s_i\} \Delta \{t_i\})\label{in=out}\\
                 && x_i(\delta^{+}(s_i)) - x_i(\delta^-(s_i)) & = 1, & \quad &\forall i \in [k], s_i\neq t_i\label{outflow_s} \tag{3a}\\
                 && x_i(\delta^{-}(t_i)) - x_i(\delta^+(t_i)) & = 1, & \quad &\forall i \in [k], s_i\neq t_i\label{inflow_t} \tag{3b} \\
                 && x_i(\delta^{+}(U)) & \ge x_i(\delta^{+}(v)), & \quad &\forall i \in [k], U \subseteq V \setminus \{t_i\}, v \in U \label{U_and_v} \tag{4}\\
                 && z_{i,v} &= x_i(\delta^{+}(v)), & \quad &\forall i \in [k], v \in V  \label{z_and_x} \tag{5}\\
                 && \sum_{i \in [k]} z_{i,v} & \geq 1, & \quad &\forall v \in V \setminus T \label{sum_of_z} \tag{6}\\
                 && x_{i,a} &\ge 0, & \quad &\forall i \in [k], a \in A \label{non_negativity}\tag{7}
\end{alignat}
\setcounter{equation}{7}

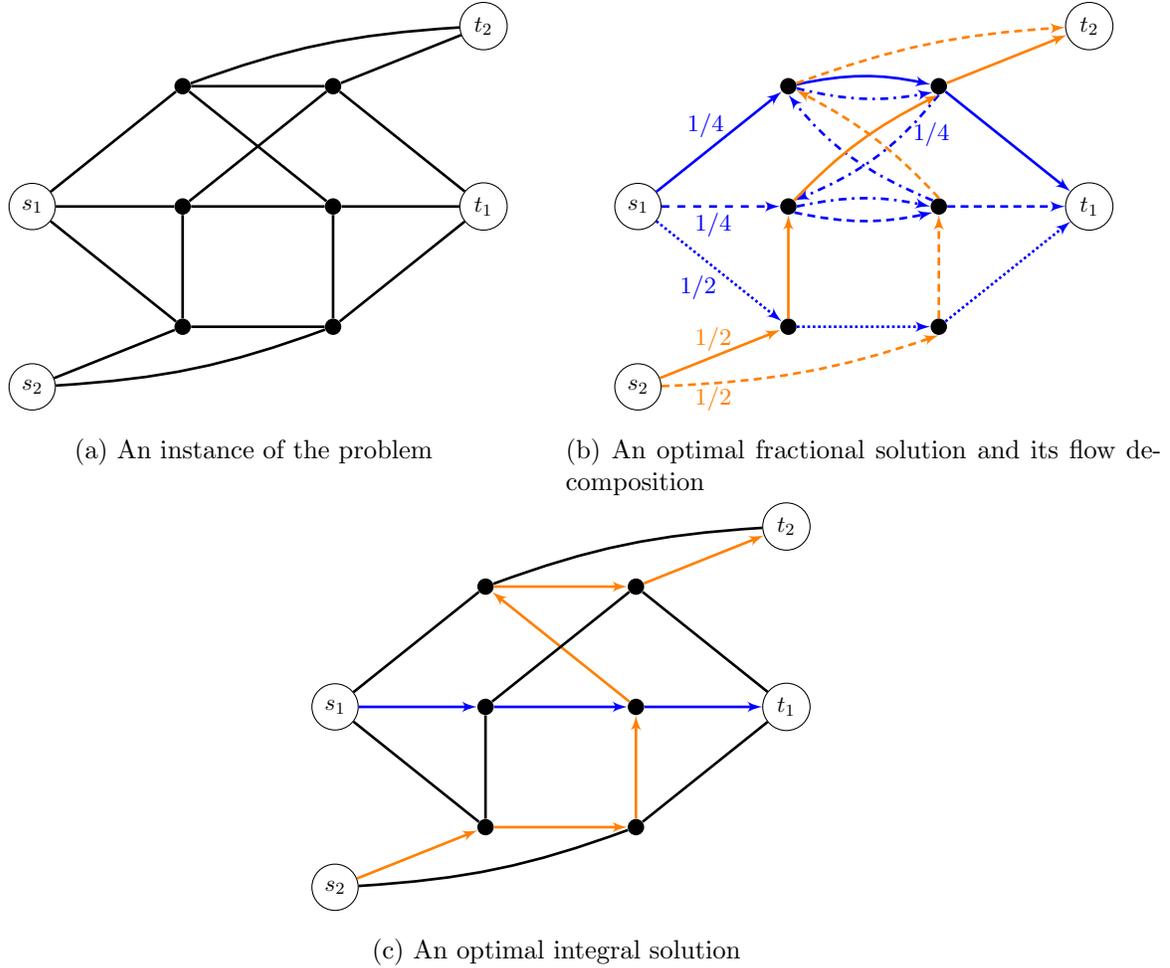
\begin{figure}[!tb]
    \centering
    \subcaptionbox{An instance of the problem \label{fig:instance}}[0.48\textwidth]{
    \begin{tikzpicture}[yscale = 0.8]

    \node[shape = circle, scale = 0.8, draw = black] (S1) at (0,-1) {$s_1$};
    \node[shape = circle, scale = 0.8, draw = black] (S2) at (0,-4) {$s_2$};
    \node[shape = circle, scale = 0.8, draw = black] (T1) at (6,-1) {$t_1$};
    \node[shape = circle, scale = 0.8, draw = black] (T2) at (6,2) {$t_2$};

    \node[shape = circle, fill = black, scale = 0.6] (1) at (2,1) {};
    \node[shape = circle, fill = black, scale = 0.6] (2) at (2,-1) {};
    \node[shape = circle, fill = black, scale = 0.6] (3) at (4,1) {};
    \node[shape = circle, fill = black, scale = 0.6] (4) at (4,-1) {};
    \node[shape = circle, fill = black, scale = 0.6] (5) at (2, -3) {};
    \node[shape = circle, fill = black, scale = 0.6] (6) at (4, -3) {};
    \node[] (L2) at (.75, -4.2) {\phantom{\footnotesize{1/2}}};

    \begin{scope}[line width = 1pt]
    \draw (S1) -- (1);
    \draw (S1) -- (2);
    \draw (S2) -- (5);
    \path (S2) edge[bend right = 10] (6);
    \draw (1) -- (3);
    \draw (1) -- (4);
    \draw (2) -- (3);
    \draw (2) -- (4);
    \draw (3) -- (T1);
    \draw (4) -- (T1);
    \draw (2) -- (5);
    \draw (4) -- (6);
    \draw (S1) -- (5);
    \draw (5) -- (6);
    \draw (6) -- (T1);
    \path (1) edge[bend left = 10] (T2);
    \draw (3) -- (T2);
    \end{scope}
    \end{tikzpicture}
    }
    \subcaptionbox{An optimal fractional solution and its flow decomposition \label{fig:lpsol}}[.48\textwidth]{
    \begin{tikzpicture}[yscale = 0.8]

    \node[shape = circle, scale = 0.8, draw = black] (S1) at (0,-1) {$s_1$};
    \node[shape = circle, scale = 0.8, draw = black] (S2) at (0,-4) {$s_2$};
    \node[shape = circle, scale = 0.8, draw = black] (T1) at (6,-1) {$t_1$};
    \node[shape = circle, scale = 0.8, draw = black] (T2) at (6,2) {$t_2$};

    \node[shape = circle, fill = black, scale = 0.6] (1) at (2,1) {};
    \node[shape = circle, fill = black, scale = 0.6] (2) at (2,-1) {};
    \node[shape = circle, fill = black, scale = 0.6] (3) at (4,1) {};
    \node[shape = circle, fill = black, scale = 0.6] (4) at (4,-1) {};
    \node[shape = circle, fill = black, scale = 0.6] (5) at (2, -3) {};
    \node[shape = circle, fill = black, scale = 0.6] (6) at (4, -3) {};

    \begin{scope}[line width = 1pt, color = blue]
    \draw[->] (S1) -- (1);
    \draw[dashed, ->] (S1) -- (2);
    \path (1) edge [->, bend left = 15] (3);
    \path (2.south east) edge [->, densely dashed, bend right = 15] (4.south west);
    \draw[->] (3) -- (T1);
    \draw[densely dashed, ->] (4) -- (T1);
    \draw[->, densely dotted] (S1) -- (5);
    \draw[->, densely dotted] (5) -- (6);
    \draw[->, densely dotted] (6) -- (T1);
    \path (1) edge[->, bend right = 15, dash dot] (3.south west);
    \path (2.east) edge[->, bend left = 15, dash dot] (4.west);
    \path (3.south) edge[->, bend left = 15, dash dot] (2.north east);
    \path (4.north west) edge[->, bend left = 15, dash dot] (1.south);
    \end{scope}

    \node[] (L1) at (0.9, 0.35) {\textcolor{blue}{\footnotesize{$1/4$}}};
    \node[] (L2) at (1, -1.3) {\textcolor{blue}{\footnotesize{$1/4$}}};
     \node[] (L1) at (0.8, -2.35) {\textcolor{blue}{\footnotesize{$1/2$}}};
    \node[] (L1) at (3.9, 0.2) {\textcolor{blue}{\footnotesize{$1/4$}}};

    \begin{scope}[line width = 1pt, color = orange]
    \path (S2) edge[->] (5);
    \path (S2) edge[->, bend right = 10, densely dashed] (6.south);
    \path (4.north) edge[->, densely dashed, bend right = 10] (1);
    \path (2) edge[->, bend left = 10] (3.south);
    \path (1) edge[->, bend left = 10, densely dashed] (T2);
    \path (3) edge[->] (T2);
    \draw[->] (5) -- (2);
    \draw[->, densely dashed] (6) -- (4);

    \node[] (L2) at (1, -3.2) {\textcolor{orange}{\footnotesize{$1/2$}}};
    \node[] (L2) at (1, -4.2) {\textcolor{orange}{\footnotesize{$1/2$}}};
    \end{scope}
    \end{tikzpicture}
    }

    \subcaptionbox{An optimal integral solution \label{fig:ipsol}}[0.48\textwidth]{
    \begin{tikzpicture}[yscale = 0.8]

    \node[shape = circle, scale = 0.8, draw = black] (S1) at (0,-1) {$s_1$};
    \node[shape = circle, scale = 0.8, draw = black] (S2) at (0,-4) {$s_2$};
    \node[shape = circle, scale = 0.8, draw = black] (T1) at (6,-1) {$t_1$};
    \node[shape = circle, scale = 0.8, draw = black] (T2) at (6,2) {$t_2$};

    \node[shape = circle, fill = black, scale = 0.6] (1) at (2,1) {};
    \node[shape = circle, fill = black, scale = 0.6] (2) at (2,-1) {};
    \node[shape = circle, fill = black, scale = 0.6] (3) at (4,1) {};
    \node[shape = circle, fill = black, scale = 0.6] (4) at (4,-1) {};
    \node[shape = circle, fill = black, scale = 0.6] (5) at (2, -3) {};
    \node[shape = circle, fill = black, scale = 0.6] (6) at (4, -3) {};
    \node[] (L2) at (.75, -4.2) {\phantom{\footnotesize{1/2}}};

    \begin{scope}[line width = 1pt]
    \path (S2) edge[->, color = orange] (5);
    \path (5) edge[->, color = orange] (6);
    \path (6) edge[->, color = orange] (4);
    \path (4) edge[->, color = orange] (1);
    \path (1) edge[->, color = orange] (3);
    \path (3) edge[->, color = orange] (T2);

    \path (S1) edge[->, color = blue] (2);
    \path (2) edge[->, color = blue] (4);
    \path (4) edge[->, color = blue] (T1);

    \draw[-] (S1) -- (5) -- (2) -- (3) -- (T1) -- (6);
    \draw[-] (S1) -- (1);
    \path (S2) edge[bend right = 10] (6);
    \path (1) edge[bend left = 10] (T2);
    
    \end{scope}
    \end{tikzpicture}
    }
    \caption{
    The fractional solution in (b) of the instance in (a) has an objective function value of 8. The first commodity sends flow along three paths and one cycle (marked by different types of blue arcs), and the second commodity sends flow along two paths drawn in orange. We have $\sum_{i\in[k]} z_{i,v} = 1$ for each $v \in V \setminus T$, which implies that the fractional solution is optimal. The optimal integral solution is shown in (c), and has an objective function value of 9.}
    \label{fig:enter-label}
\end{figure}

An example instance of the problem, as well as its optimal fractional and integral solution, are depicted in \cref{fig:enter-label}.

We will denote the above LP by \LP and its optimal objective function value by $\LPopt$ (this always exists as we assume feasibility and the objective function is bounded by 0). We note that \LP is essentially equivalent to the LP used by Böhm et al.~\cite{bohm2025approximating}. In particular, it can be solved efficiently, as separation of constraints \eqref{U_and_v} is possible via minimum cut computations. 
To see that \LP is a relaxation for (Graphic) Multi-Path TSP, observe that an optimal integral solution is a feasible solution to \LP by directing all $s_i$-$t_i$ paths and sending a flow of $1$ along each path and defining $z_{i,v}$ as in \eqref{z_and_x}.

\section{Graphic Multi-Path TSP}
In this section, we present our 2-approximation algorithm and thus prove \Cref{thm:main}. 
We start with a description of the algorithm and then provide pseudo-code in
 Algorithm~\ref{alg:main}.
The algorithm first computes an optimal fractional solution to \LP. 
The solution restricted to each commodity is a flow. By a greedy approach, it can thus be decomposed into at most $|A|$
paths and cycles, i.e., a polynomial number \cite[Theorem 11.1]{schrijver2003combinatorial}.
For every commodity $i \in [k]$, we denote by $P_1^i,\ldots, P_{\ell_i}^i$ the paths and by $C_1^i,\ldots, C_{\ell^{'}_i}^i$ the cycles in such a decomposition, 
with flow values $\lambda_1^i, \ldots, \lambda_{\ell_i}^i$ and $\mu_1^i, \ldots, \mu_{\ell^{'}_i}^i$, respectively.
Note that $\sum_j \lambda_j^i = 1$ for all commodities $i \in [k]$ with $s_i \neq t_i$ by constraints \eqref{outflow_s} and \eqref{inflow_t}. 
Moreover, let $V(P)$ denote the vertices of path $P$. 
For each commodity $i$ and for each vertex $v\neq t_i$ we have 
 $\sum_{j\colon v \in V(P^i_j)} \lambda^i_j \le z_{i,v}$.

In our algorithm, we sample exactly one path from the path decomposition for each commodity $i \in [k]$ with $s_i \neq t_i$, where path $P^i_j$ has probability $\lambda_j^i$ of being sampled. Formally, for each such commodity, we partition the interval $[0,1]$ into $\ell_i$ intervals with length $\lambda_1^i, \ldots, \lambda_{\ell_i}^i$ respectively and sample $y \in [0,1]$ uniformly at random.\footnote{Note, that the ordering of the partition does not matter as we sample from a uniform distribution.} We then choose the path corresponding to the unique interval to which $y$ belongs. 

Lastly, we connect all vertices that have not been covered by any path. We can do so at a cost of two per uncovered vertex as follows.
Denote by $X$ the vertices in the union of all sampled paths. 
Let $Y \coloneqq V\setminus X$ be the (so far) disconnected vertices. While $Y$ is nonempty, choose an arbitrary arc $a = (v,w)$ with $v \in Y$ and $w \in V \setminus Y$; since $G'$ is strongly connected, such an arc $a$ exists. 
We connect $v$ by adding the arcs $(w,v)$ and $(v,w)$ to a walk containing $w$. Then we remove $v$ from $Y$ and repeat this procedure until all vertices have been removed from $Y$.
This increases the cost of the solution by at most $2 \cdot |Y|$.

This algorithm yields a set of walks covering all vertices and connecting all pairs $s_i$-$t_i$ for $i \in [k]$. An example run of the algorithm is depicted in \cref{fig:algo}.

%
\begin{algorithm}[t]
    \caption{Algorithm for Graphic Multi-Path TSP}
    \label{alg:main}
    \begin{algorithmic}[0] 
            \State \textbf{Input:} Graph $G'=(V,A)$, tuples $\{(s_1, t_1), (s_2, t_2), \ldots, (s_k, t_k)\}$.
            \State \textbf{Output:} Collection $(F_i)_{i \in [k]}$ of directed walks covering all vertices, where $F_i$ is connecting $s_i$ to $t_i$ for all $i\in [k]$.
            \State Compute an optimal fractional solution $(x,z)$ to \LP. 
            \State Decompose each $x_i$ into paths $P_1^i,\ldots, P_{\ell_i}^i$ and cycles $C_1^i,\ldots, C_{\ell^{'}_i}^i$.\\
            \vspace{0.2em}
            \textbf{Sampling Phase:}
            \For{$i \in [k]$ with $s_i \neq t_i$} 
            \State{Partition the interval $[0,1]$ into $\ell_i$ intervals with length $\lambda_1^i, \ldots, \lambda_{\ell_i}^i$ respectively.}
            \State{Sample $y \in [0,1]$ uniformly at random.}
            \State{Choose path $P_j^i$ corresponding to the unique interval to which $y$ belongs.}
            \State{Let $F_i \coloneqq P_j^i$.}
            \EndFor 
            \For{$i \in [k]$ with $s_i = t_i$}
             \State{Let $F_i \coloneqq \{s_i\}$.}
            \EndFor \\
            \vspace{0.2em}
            \textbf{Reconnection Phase:}
            \State Set $X$ as the set of all vertices in the union of the sampled paths.
            \State Set $Y \coloneqq V\setminus X$.
            \While{$Y \neq \emptyset$}
                \State Choose an arc $a = (v,w) \in A$ with $v \in Y$, and $w \in V \setminus Y$.
                \State Add both $(v,w)$ and $(w,v)$ to $F_i$ for some $F_i$ containing $w$.
                \State Update $Y \coloneqq Y \setminus \{v\}.$
            \EndWhile \\
        \Return $(F_i)_{i\in[k]}$
    \end{algorithmic}
\end{algorithm}

\begin{figure}[H]
    \centering
        \centering
    \begin{tikzpicture}[yscale = 0.8]

    \node[shape = circle, scale = 0.8, draw = black] (S1) at (0,-1) {$s_1$};
    \node[shape = circle, scale = 0.8, draw = black] (S2) at (0,-4) {$s_2$};
    \node[shape = circle, scale = 0.8, draw = black] (T1) at (6,-1) {$t_1$};
    \node[shape = circle, scale = 0.8, draw = black] (T2) at (6,2) {$t_2$};

    \node[shape = circle, fill = black, scale = 0.6] (1) at (2,1) {};
    \node[shape = circle, fill = black, scale = 0.6] (2) at (2,-1) {};
    \node[shape = circle, fill = black, scale = 0.6] (3) at (4,1) {};
    \node[shape = circle, fill = black, scale = 0.6] (4) at (4,-1) {};
    \node[shape = circle, fill = black, scale = 0.6] (5) at (2,-3) {};
    \node[shape = circle, fill = black, scale = 0.6] (6) at (4,-3) {};

    \begin{scope}[line width = 1.4pt]
    \path (S1) edge[->, blue] (1);
    \path (S2) edge[bend right = 10, ->, orange] (6.south west);
    \path (6) edge[->, orange] (4);
    \path (1) edge[->, blue] (3);
    \path (3) edge[->, blue] (T1);
    \path (1) edge[bend left = 10, ->, orange] (T2);
    \path (4) edge[->, orange] (1);
    \path (2) edge[->, bend left = 15, densely dotted, orange] (4);
    \path (4) edge[->, bend left = 15, densely dotted, orange] (2);
    \path (5) edge[->, bend left = 15, densely dotted, orange] (2);
    \path (2) edge[->, bend left = 15, densely dotted, orange] (5);
    \end{scope}
    \end{tikzpicture}

    \caption{
    Run of the algorithm on the instance of Figure \ref{fig:instance}. The algorithm, when run on this instance, samples one path per commodity, in this case the solid blue path $F_1$ for commodity 1 and the solid orange path $F_2$ for commodity 2. After sampling, 
    two of the inner vertices are not covered by the union of all paths $F_i$.
    In the reconnection phase, these vertices get connected with a cost of 2 per vertex (marked by dotted arcs).
    \label{fig:algo}
    }
\end{figure}
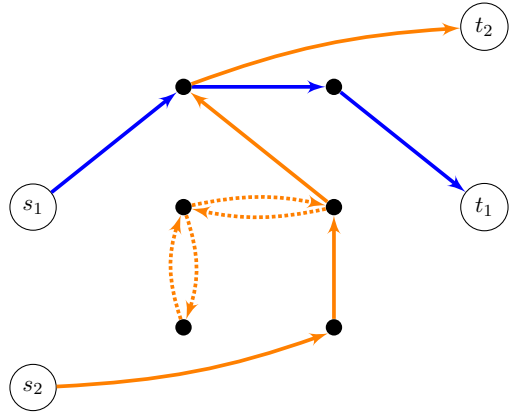

\subsection{Correctness}
To show correctness of \cref{alg:main}, we need to show that it terminates in polynomial time and that we can form $k$ walks from $s_i$ to $t_i$ for all $i \in [k]$ covering all vertices. 
\begin{restatable}{lemma}{correctness}\label{thm:correctness}
    \Cref{alg:main} terminates in polynomial time and outputs a valid solution for the bidirected version of Graphic Multi-Path TSP.
\end{restatable}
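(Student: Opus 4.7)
The plan is to verify the two requirements of the lemma separately: that each step of Algorithm~\ref{alg:main} runs in polynomial time, and that the walks $(F_i)_{i \in [k]}$ returned at the end form a valid solution.

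For the running time, I would first note that an optimal fractional solution to \LP can be computed in polynomial time; separation for constraints \eqref{U_and_v} reduces to a polynomial number of minimum-cut computations per commodity, while the remaining constraints are of polynomial size. For each commodity, decomposing $x_i$ greedily into paths and cycles yields at most $|A|$ pieces in polynomial time by \cite[Theorem~11.1]{schrijver2003combinatorial}. The sampling phase performs one draw per commodity. For the reconnection phase, each pass through the while loop removes exactly one vertex from $Y$, so the loop runs for at most $|V|$ iterations, and each iteration does polynomial work (scan $A$ for a qualifying arc and pick some $F_i$ containing $w$).

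For validity, I plan to argue that every $F_i$ is an $s_i$-$t_i$ walk and that $\bigcup_i V(F_i) = V$. After the sampling phase the first property is immediate: when $s_i \neq t_i$, $F_i = P_j^i$ is an $s_i$-$t_i$ path from the decomposition of $x_i$; when $s_i = t_i$, $F_i = \{s_i\}$ is a trivial walk. The only operation that later modifies an $F_i$ appends the two arcs $(w,v)$ and $(v,w)$ at a point where the walk already visits $w$; this inserts an out-and-back detour that returns immediately to $w$, so the resulting object is still an $s_i$-$t_i$ walk. Upon termination of the while loop, $Y = \emptyset$, which means every vertex of $V$ lies on some $F_i$.

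The single point needing a bit of care, and what I would call the main (mild) obstacle, is that the reconnection loop is always well-defined: whenever $Y \neq \emptyset$ we must produce an arc $(v,w)$ with $v \in Y$ and $w \in V \setminus Y$, and $w$ must lie on some $F_i$. Here my plan is to invoke strong connectivity of $G'$ (which follows from the assumption that $G$ is connected) together with the observation that $V \setminus Y$ is non-empty throughout the loop: its initial value $X$ contains every endpoint of each sampled path as well as every $s_i$ of a trivial commodity, hence $X \neq \emptyset$ whenever $k \geq 1$, and $X$ only grows as the loop proceeds. Strong connectivity then yields an arc across the cut $(Y, V \setminus Y)$ in the required direction, and since $w \in V \setminus Y = X$, the vertex $w$ lies on some $F_i$ by construction. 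This establishes both that the loop can always make progress and that the appended detour attaches to an existing walk, completing the proof.
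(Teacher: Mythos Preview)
Your proposal is correct and follows essentially the same approach as the paper's proof: you argue termination via the monotone decrease of $|Y|$, invoke strong connectivity of $G'$ to guarantee an arc across the cut $(Y, V\setminus Y)$, maintain the invariant that every vertex in $V\setminus Y$ lies on some $F_i$, and observe that inserting the detour $(w,v),(v,w)$ preserves the walk structure. The only difference is that you spell out the polynomial-time solvability of \LP and the flow decomposition inside the proof, whereas the paper handles these points in the surrounding text and keeps the lemma's proof focused on the reconnection loop.
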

\begin{proof}
    The algorithm clearly terminates in polynomial time as the size of $Y$ decreases by one in each iteration of the \texttt{while}-loop, resulting in at most $O(n)$ iterations.

    Note that after the sampling phase, each $F_i$ connects $s_i$ to $t_i$ for $i \in [k]$ by a simple (possibly empty) path. 
    In the reconnection phase, we then make sure that all vertices are covered. 
    To see that this is possible, observe that since $G'$ is strongly connected, there is always a vertex $v \in Y$ such that there exists some $w \in V \setminus Y$  with $(v,w) \in A$. 
    As we maintain the invariant that all vertices $V \setminus Y$ are covered by at least one walk $F_i$, we have that there is some $i \in [k]$ with $w \in F_i$. 
    When adding a vertex $v$ to a walk $F_i$ containing $w$, we add the arcs $(w,v)$ and $(v,w)$ and thus maintain that $F_i$ is a walk.

\end{proof}
 
\subsection{Approximation Guarantee}

In this section, we prove that the approximation ratio of our algorithm is 2. 
For the analysis we introduce new variables as follows. Recall that $P_j^i$ is the $j$-th path of commodity $i$, of weight $\lambda_j^i$. 
We define
\begin{alignat*}{3}
z_{i,v}^P & \coloneqq \begin{cases} 
\sum_{j \colon  v \in V(P_j^i) } \lambda_j^i &\text{ if } v\neq t_i,\\
0 &\text{ if } v=t_i,
\end{cases}
& \qquad \qquad
z_v^P & \coloneqq  \sum_{i \in [k]} z_{i,v}^P. 
\end{alignat*}
Intuitively, $z_{i,v}^P$ denotes the outflow of vertex $v$ in commodity $i$ via paths (i.e., that outflow of $v$ which does not leave $v$ via cycles) and $z_v^P$ denotes the total outflow of vertex $v$ via paths.

\begin{lemma}
\label{lem:randomized_two_approx}
    \Cref{alg:main} computes a solution to the bidirected version of Graphic Multi-Path TSP of cost at most $2\cdot \LPopt$, and thus yields a randomized 2-approximation algorithm.
\end{lemma}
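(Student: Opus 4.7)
The plan is to express the expected path cost and the expected reconnection cost in closed form using the LP variables, and then to combine them via a per-vertex inequality that leverages the cycle outflow $R_v \coloneqq z_v - z_v^P$, which is paid for by $\LPopt$ but contributes nothing to the sampled paths.

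First I would compute both contributions. Since commodity $i$ samples $P_j^i$ with probability $\lambda_j^i$, one has $E[|F_i|] = \sum_j \lambda_j^i |P_j^i| = \sum_a x_{i,a}^P = \sum_v z_{i,v}^P$, so summing over commodities yields an expected path cost equal to $\sum_v z_v^P$. For $v \in V \setminus (S \cup T)$, the events $\{v \in F_i\}$ are independent across commodities with $s_i \neq t_i$ and each has probability $z_{i,v}^P$, while vertices in $S \cup T$ are always covered; hence $E[|Y|] = \sum_{v \notin S \cup T} \prod_i (1 - z_{i,v}^P)$. The theorem therefore reduces to the per-vertex inequality
\[
\prod_i (1 - z_{i,v}^P) \;\leq\; \tfrac{1}{2} z_v^P + R_v, \qquad v \in V \setminus (S \cup T),
\]
because summing it yields $2\,E[|Y|] \leq \sum_v z_v^P + 2 \sum_v R_v = 2\LPopt - \sum_v z_v^P$, which, added to the expected path cost, gives total expected cost at most $2\LPopt$.

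To establish the per-vertex inequality, I would invoke constraint \eqref{sum_of_z} to obtain $z_v^P + R_v \geq 1$ and then split on $s \coloneqq z_v^P$. If $s \geq 1$, then $R_v \geq 0$ and it suffices to show $\prod_i (1 - z_{i,v}^P) \leq s/2$; by Jensen applied to the concave function $\log(1-x)$, $\prod \leq (1 - s/k)^k \leq e^{-s} \leq 1/e < 1/2 \leq s/2$. If $s < 1$, then $R_v \geq 1 - s$, so it suffices to show $\prod \leq 1 - s/2$; by the Bonferroni lower bound for a union of independent events, $\prod_i (1 - z_{i,v}^P) \leq 1 - s + s^2/2 \leq 1 - s/2$, where the last step uses $s \leq 1$. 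The main obstacle is exactly this product bound: neither the exponential/Jensen estimate nor the Bonferroni estimate covers the full range of $z_v^P$ alone, so a case split keyed to the covering constraint $z_v \geq 1$ is required; everything else is bookkeeping that follows once the per-vertex inequality is in place.
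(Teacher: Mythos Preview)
Your proposal is correct and follows essentially the same per-vertex analysis as the paper: expected sampling cost $\sum_v z_v^P$, reconnection probability $\prod_i(1-z_{i,v}^P)$ for $v\notin S\cup T$, and a case split on $z_v^P$ combined with the covering constraint $z_v\ge 1$. The paper's version is a bit more streamlined, using $1-x\le e^{-x}$ uniformly to get $\prod_i(1-z_{i,v}^P)\le e^{-z_v^P}$ and then bounding $z_v^P+2e^{-z_v^P}\le\max\{2,2z_v^P\}\le 2z_v$; this avoids introducing the residual $R_v$ explicitly and makes your Bonferroni step in the $z_v^P<1$ case unnecessary (since $e^{-s}\le 1-s/2$ for $s\in[0,1]$ already suffices).
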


\begin{proof}
As we have proven correctness and polynomial running time in \Cref{thm:correctness} it only remains to prove the approximation guarantee. We sum the expected cost per vertex over all vertices $v \in V$ and split 
the expected cost of vertex $v$ into its \emph{sampling cost} and its \emph{reconnection cost}. The latter is the cost incurred if $v$ is not sampled and has to be reconnected.
The expected sampling cost of $v$ is the expected number of outgoing arcs $a \in \delta^+(v)$ that were sampled in the algorithm. This can be formulated as follows
\begin{align}
    \mathbb{E}[\text{sampling cost of $v$}] &= \sum_{i \in [k], v \neq t_i} \ \sum_{ j \colon v \in V(P^i_j) } \lambda_j^i =z_v^P.
    \label{samplingCost}
\end{align}
Thus, by summing over all vertices $v \in V$, we obtain the total cost that is incurred during the sampling phase.

To upper bound the expected reconnection cost of a vertex $v$, we need to bound the probability that $v$ is not part of any sampled path. First note, that for all terminal vertices $v\in S \cup T$, this probability is 0, as our sampling process ensures that all $(s_i, t_i)$-pairs are connected. For some non-terminal vertex $v \in V \setminus (S \cup T)$, the probability that $v$ is not covered by commodity $i \in [k]$ is $(1-z_{i,v}^P)$. Thus, the probability that $v \in V \setminus (S \cup T)$ is not covered by any commodity is $\prod_{i \in [k]} (1-z_{i,v}^P)$. Using the fact that $1-x \le e^{-x}$ for all $x\in\mathbb{R}$, we obtain the following bound
\[
    \mathbb{P}[\text{$v$ not covered by any path}] = \begin{cases} \prod_{i \in [k]} (1-z_{i,v}^P) \le \prod_{i \in [k]}e^{-z_{i,v}^P} = e^{-z_v^P}, & v \not\in S \cup T, \\ 0, & v\in S \cup T. \end{cases}
\]
As the reconnection cost for every unconnected vertex is 2 (which follows directly from \Cref{alg:main} and \Cref{thm:correctness}), this immediately implies
\begin{equation}
    \mathbb{E}[\text{reconnection cost of $v$}] \le \begin{cases} 2\cdot e^{-z_v^P}, & v\not\in S \cup T, \\ 0, & v\in S \cup T. \end{cases} \label{reconnectionCost}
\end{equation}
Putting \eqref{samplingCost} and \eqref{reconnectionCost} together we can bound the cost induced by vertex $v \in V \setminus (S \cup T)$ as follows

\begin{equation*}
z_v^P + 2e^{-z_v^P} \leq \begin{cases}
    2, \quad & \text{ if } z_v^P \leq 1,\\
    z_v^P+1 \leq 2 z_v^P, & \text{ if } z_v^P > 1.
\end{cases}
\end{equation*}
Recall, that the cost for vertex $v \in S \cup T$ is bounded by $z_v^P$. Moreover, let us define $z_v \coloneqq \sum_{i \in [k]}z_{i,v}$.
Now, summing over all vertices $v \in V$, we obtain the wished bound. 

\begin{align}
\label{eq:cost_of_tour}
    \mathbb{E}[\text{cost of tour}] &= \sum_{v \not\in S \cup T} (z_v^P + 2\cdot e^{-z_v^P}) + \sum_{v \in S \cup T} z_v^P  \nonumber \\
    &\leq \sum_{v \not\in S \cup T} \max \{2, 2z_v^P\} + \sum_{v \in S\cup T} z_v^P \nonumber \\
    &\leq 2 \sum_{v \not\in S \cup T} z_{v} + \sum_{v \in S\cup T} z_{v} \le 2 \sum_{v \in V} z_{v} = 2 \cdot \text{LP}.
\end{align} 
Recall that \text{LP} denotes the optimal objective function value of \LP. In the penultimate inequality we used the fact that $z_v^P \leq z_v$ for all $v \in V$, and $z_v \geq 1$ for all non-terminal vertices $v \not\in S \cup T$.
\end{proof}
Our analysis immediately implies an upper bound for the integrality gap of \LP.
\begin{corollary}\label{cor:ig}
    For Graphic Multi-Path TSP instances, 
    the integrality gap of \LP is bounded from above by $2$.
\end{corollary}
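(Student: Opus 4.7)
The plan is to observe that Corollary~\ref{cor:ig} is essentially an immediate consequence of the analysis already carried out in \Cref{lem:randomized_two_approx}, and to make explicit the short chain of inequalities linking the expected cost of the randomized algorithm to the integrality gap.

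First, I would recall that \Cref{alg:main}, for any feasible Graphic Multi-Path TSP instance, produces a random collection of walks $(F_i)_{i\in[k]}$ that is a feasible integer solution to the bidirected Multi-Path TSP (by \Cref{thm:correctness}), and whose expected total cost satisfies
\begin{equation*}
    \mathbb{E}\!\left[\sum_{i\in[k]} |F_i|\right] \;\le\; 2 \cdot \LPopt
\end{equation*}
by the bound \eqref{eq:cost_of_tour} in the proof of \Cref{lem:randomized_two_approx}. Since the expected cost of the random solution is at most $2 \cdot \LPopt$, there must exist at least one realization of the random choices that yields a feasible integer solution of cost at most $2 \cdot \LPopt$. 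In particular, the optimal integer solution satisfies $\OPT \le 2 \cdot \LPopt$.

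Next, I would remark that undirecting the arcs of this feasible bidirected solution yields a feasible solution to the original (undirected) Graphic Multi-Path TSP of no greater cost (cf.\ the reduction to the bidirected version preceding \LP), so the bound $\OPT \le 2\cdot \LPopt$ also holds for the original problem. Combining this with the trivial bound $\LPopt \le \OPT$ (since \LP is a relaxation), we obtain
\begin{equation*}
    \frac{\OPT}{\LPopt} \;\le\; 2
\end{equation*}
on every feasible instance, which is exactly the statement that the integrality gap of \LP is at most $2$.

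There is no real obstacle here beyond invoking the probabilistic method: the nontrivial work, namely bounding the expected sampling and reconnection costs against $\LPopt$, is already done in \Cref{lem:randomized_two_approx}. The only point worth being careful about is ensuring that the solution obtained from the algorithm is a feasible integer solution for the (undirected) Graphic Multi-Path TSP instance, so that its cost upper-bounds $\OPT$; this is precisely what \Cref{thm:correctness} provides, together with the straightforward reduction between the directed and undirected formulations.
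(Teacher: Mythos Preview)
Your proposal is correct and matches the paper's approach exactly: the paper states the corollary as an immediate consequence of the analysis in \Cref{lem:randomized_two_approx} without giving a separate proof, and your write-up simply makes explicit the standard probabilistic-method step that extracts $\OPT \le 2\cdot \LPopt$ from the expectation bound.
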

    

\subsection{Derandomization} 

The goal of this section is to derandomize \Cref{alg:main}. We do this by using the method of conditional expectations, to get a \emph{deterministic} $2$-approximation algorithm.
We will show the following lemma, which directly implies \Cref{thm:main}.

\begin{lemma}
    \Cref{alg:main} can be derandomized, while maintaining that the computed solution has cost at most $2\cdot \LPopt$.
\end{lemma}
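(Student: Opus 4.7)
The plan is to apply the method of conditional expectations to derandomize the $k$ sampling decisions in \Cref{alg:main}. I process commodities $i \in [k]$ with $s_i \neq t_i$ one at a time in an arbitrary fixed order, and for each such commodity I replace the random sample of $P^i$ from $\{P_1^i,\dots,P_{\ell_i}^i\}$ with the deterministic choice that keeps the conditional expected total cost of the final output non-increasing.

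Formally, for a partial assignment in which the commodities in some $I \subseteq [k]$ have already been fixed to paths $(P^i)_{i \in I}$, define the potential
\[
\Phi(I) \;\coloneqq\; \sum_{i \in I} |P^i| \;+\; \sum_{\substack{i \notin I\\ s_i \neq t_i}} \sum_{j=1}^{\ell_i} \lambda_j^i\, |P_j^i| \;+\; 2 \sum_{v \notin S \cup T} q_v(I),
\]
where $q_v(I) \coloneqq 0$ if $v \in \bigcup_{i \in I} V(P^i)$ and $q_v(I) \coloneqq \prod_{i \notin I}(1 - z_{i,v}^P)$ otherwise. The first two summands represent the (already committed plus still expected) sampling cost, and the last summand is twice the expected number of non-terminal vertices that will still require reconnection given the current commitments. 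By construction, $\Phi(I)$ equals the conditional expected cost of the algorithm's output given the commitments in $I$, and each of its finitely many summands is a sum over polynomially many products of polynomially many factors, so $\Phi(I)$ can be evaluated in polynomial time.

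By \Cref{lem:randomized_two_approx} we have $\Phi(\emptyset) = \mathbb{E}[\text{cost}] \leq 2 \cdot \LPopt$. For the current commodity $i \notin I$, the identity
\[
\Phi(I) \;=\; \sum_{j=1}^{\ell_i} \lambda_j^i \cdot \Phi\bigl(I \cup \{i\}\bigr)\big|_{P^i = P_j^i}
\]
follows from the tower property together with $\sum_j \lambda_j^i = 1$, so some index $j^*$ satisfies $\Phi(I \cup \{i\})|_{P^i = P_{j^*}^i} \leq \Phi(I)$; we set $P^i \coloneqq P_{j^*}^i$. After all commodities with $s_i \neq t_i$ have been processed, no randomness remains, so $\Phi([k])$ coincides with the actual cost of the deterministic solution, and telescoping the inequalities yields cost $\leq \Phi(\emptyset) \leq 2 \cdot \LPopt$. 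I do not foresee a serious obstacle here: the only genuine bookkeeping task is maintaining $q_v(I)$ efficiently across iterations, which is straightforward because committing to a new $P^i$ only flips the indicator for vertices in $V(P^i)$ and multiplies the remaining factor by $(1 - z_{i,v}^P)^{-1}$ for the others.
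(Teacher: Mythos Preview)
Your proof is correct and follows essentially the same approach as the paper: both apply the method of conditional expectations, fixing one commodity at a time and choosing the path that keeps the conditional expected cost (your potential $\Phi$) from increasing, with the base case supplied by \Cref{lem:randomized_two_approx}. Your presentation is a bit more explicit in defining the potential and arguing efficient evaluability, but the underlying argument is the same induction the paper gives.
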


\begin{proof}

We will prove the lemma by induction. 

The induction claim is that \Cref{alg:main} can be adapted in such a way that a path for commodities $i \in [h-1]$ is chosen deterministically in polynomial time, while the paths of all remaining commodities $i \in [k] \setminus [h-1]$ are sampled. More concretely, let $M \coloneqq \cup_{i=1}^{h-1} V(Q^i) \cup S \cup T$. We can choose paths 
$Q^i\in \{P_1^i,\ldots, P_{\ell_i}^i\}$, for $i \in [h-1]$ such that 
 \begin{equation*}
    \sum_{i=1}^{h-1} |Q^i| + 2\sum_{v\not\in M}\prod_{i=h}^{k} (1-z^P_{i,v}) + \sum_{i=h}^{k}\sum_{j=1}^{\ell_i}\lambda^i_j |P^i_j| \le 2 \cdot \LPopt,
\end{equation*}   
%
where the first term represents the cost of the already chosen paths $Q^i$ for $i \in [h-1]$, 
the next term contains the expected reconnection cost for non-terminal vertices not yet covered by any path, 
and the last term represents the expected cost of paths sampled for commodities $h$ to $k$. 

We induct over $h$, starting with $h=1$, where $[0]= \emptyset$. In this case, the induction claim follows directly by \Cref{lem:randomized_two_approx} (and its proof).

Now, suppose the induction claim holds for some $h$. 
  We fix a path $Q\in \{P_1^h,\ldots, P_{\ell_h}^h\}$. 
   Conditioned on choosing $Q$ as the path for commodity $h$ and sticking to the paths $Q^i$ for $i \in [h-1]$
   the expected cost of the output of the algorithm is
 \begin{equation}
 \label{eq:derand}
    |Q| + \sum_{i=1}^{h-1} |Q^i| + 2\sum_{v\not\in M \cup V(Q)}\prod_{i=(h+1)}^{k} (1-z^P_{i,v}) + \sum_{i=(h+1)}^{k}\sum_{j=1}^{\ell_i}\lambda^i_j |P^i_j|.
\end{equation}   
 

    By induction hypothesis, when sampling a path for commodity $h$ the expected cost of the output is at most
    $2 \cdot \text{LP}$.
   Hence, there exists some $Q^h\in \{P_1^h,\ldots, P_{\ell_h}^h\}$ such that the induction statement is satisfied. 
    As there are only linearly many options for commodity $h$ and we pick the path minimizing \eqref{eq:derand}, this can be executed in polynomial time.

    
        
\end{proof}

\section{Graphic Ordered TSP}
\label{sec:ordered_TSP}
In this section, we show that a variant of \Cref{alg:main} gives a $(1+\frac{e}{2e-2})$-approximation algorithm for the Graphic Ordered TSP. This improves upon the best known approximation factor for general Ordered TSP which is $\frac{3}{2}+\frac{1}{e} < 1.868$, see \cite{AMN24, bohm2025approximating}. 

Intuitively, in the Ordered TSP additionally to a set of vertices with a metric, we are given precedence constraints that determine a linear ordering on a subset of the vertices. The goal is to compute a TSP tour, i.e., a tour through all the vertices, where they appear in an order obeying the precedence constraints.
We will consider the graphic version of the problem and we will formally define it as a special case of Multi-Path TSP, as this fits well for our work.\\[1em]
\textsc{Graphic Ordered TSP}\vspace*{-0.2cm}
\begin{description}
  \item[\textbf{Input:}] An undirected, connected graph $G = (V,E)$ with unit edge costs and a list of $k$ terminals $(o_1,\ldots,o_k)$ with $o_i \in V$ for $i \in [k]$. \vspace*{-0.2cm}
  
  \item[\textbf{Question:}] For each $i \in [k]$, find a walk $P_i$ from $o_i$ to $o_{i+1}$ (using the notation $o_{k+1} = o_1$), such that $\bigcup_{i \in [k]} P_i = V$, and the total cost of all walks $\sum_{i\in[k]} |P_i|$ is minimized.
\end{description}
\vspace{1em}

The problem is a special case of Multi-Path TSP by setting $s_i=o_i$ and $t_i=o_{i+1}$, for all $1\le i \le k$. We will use this notation while describing the algorithm. 
While we allow in our definition that a vertex is visited multiple times on a tour, this is not always the case in the literature. However, in \cite{bohm2025approximating} it is shown how to shortcut a tour without violating the precedence constraints. Thus, when working with the metric completion, our formulation is equivalent to the one in which each vertex is just allowed to be visited once.

Our proposed algorithm for Graphic Ordered TSP works similarly  to \Cref{alg:main}, 
except that we do not double the edges while connecting the left-over vertices, 
and instead find a minimum size $\Odd$-join to correct the degree parity of the vertices (see \Cref{alg:otsp}). Recall that by $G' = (V,A)$ we denote the bidirected version of the undirected graph $G$.

\begin{algorithm}[h]
    \caption{Algorithm for Graphic Ordered TSP}
    \label{alg:otsp}
    \begin{algorithmic}[0] 
            \State \textbf{Input:} Graph $G'=(V,A)$, terminals $(o_1, \ldots, o_k)$.
            \State \textbf{Output:} Collection $(F_i)_{i \in [k]}$ of directed walks covering all vertices, where $F_i$ is connecting $o_i$ to $o_{i+1}$ for all $i\in [k]$.\\
            \vspace{-0.8em}
            \State Let $s_i \coloneqq o_i$ and $t_i \coloneqq o_{i+1}$ for all $i \in [k]$.
            \State Compute an optimal fractional solution $(x,z)$ to \LP. 
            \State Decompose each $x_i$ into paths $P_1^i,\ldots, P_{\ell_i}^i$ and 
cycles $C_1^i,\ldots, C_{\ell^{'}_i}^i$.\\
            \vspace{0.2em}
            \textbf{Sampling Phase:}
            \For{$i \in [k]$} 
            \State{Partition the interval $[0,1]$ into $\ell_i$ intervals with length $\lambda_1^i, \ldots, \lambda_{\ell_i}^i$ respectively.}
            \State{Sample $y \in [0,1]$ uniformly at random.}
            \State{Choose path $P_j^i$ corresponding to the unique interval to which $y$ belongs.}
            \State{Let $F_i \coloneqq P_j^i$.}
            \EndFor \\
            \vspace{0.2em}
            \textbf{Reconnection Phase:}
            \State Set $X$ as the set of all vertices in the union of the sampled paths.
            \State Set $Y \coloneqq V\setminus X$.
            \While{$Y \neq \emptyset$}
                \State Choose an arc $a = (v,w) \in A$ with $v \in Y$, and $w \in V \setminus Y$.
                \State Add $(v,w)$ to $F_i$ for some $F_i$ containing $w$.
                \State Update $Y \coloneqq Y \setminus \{v\}.$
            \EndWhile \\
        \vspace{0.2em}
        \textbf{Parity Correction Phase:}
        \State{Let $\Odd$ be the set of odd degree vertices.}
        \State{Let $J$ be a smallest $\Odd$-join in $G$.}
       
        \Return $(\cup_{i\in[k]}F_i)\cup J$
    \end{algorithmic}
\end{algorithm}

\orderedtsptheorem*
\begin{proof}
    The correctness of \Cref{alg:otsp} is easy to see.
   Furthermore, the algorithm has polynomial running time as the $\Odd$-join polytope is integral and thus finding the smallest $\Odd$-join is possible in polynomial time, see Chapter~29 in~\cite{schrijver2003combinatorial}. 
   
   In the following, we will analyze the approximation ratio of the algorithm.
    As in the proof of \Cref{lem:randomized_two_approx}, we can bound the cost of the sampling and reconnection phase by 
    \[
   \mathbb{E}[\text{cost of sampling and reconnecting}] = \sum_{v \not\in S \cup T} (z_v^P + e^{-z_v^P}) + \sum_{v \in S \cup T} z_v^P.
    \]
 The difference to the proof of \Cref{lem:randomized_two_approx} is that in the reconnection phase, instead of doubling edges for parity correction, we use a smarter approach.

To upper bound the cost of the parity correction step, we balance between two candidate $\Odd$-joins, to obtain a bound
on the cost of a smallest $\Odd$-join $J$. 
    Let $(x,z)$ be an optimal solution to \LP, and define $y\coloneqq \frac{1}{2}x$. 
    By the analysis of Wolsey~\cite{Wol80}, $y$ is in the $\Odd$-join polytope (see also \cite{bohm2025approximating}).  As the $\Odd$-join polytope is integral, this yields an upper bound of $ \frac{1}{2} \LPopt$ for the cost of the parity correction.
    
Also the edges added in the reconnection phase, which we will denote as $L$, comprise a valid $\Odd$-join, i.e., their incidence vector $\chi_L$ is also in the $\Odd$-join polytope. 
    If, instead of $J$, we use $L$ for parity correction (i.e.,\ run \cref{alg:main}), the expected cost of the parity correction is
    $\sum_{v \not\in S\cup T}  e^{-z_v^P}$.

    For sake of the analysis we will now balance between these two costs by choosing a convex combination of $y$ and $\chi_L$. Let $\alpha \in [0,1]$ and define a new vector in the $\Odd$-join polytope, namely $\alpha  \chi_L + (1-\alpha) y$. 

    The cost of this vector for parity correction is 
\begin{equation*}
    \mathbb{E}[\text{cost of parity correction}] = \sum_{v\not\in S \cup T}\left(\alpha \cdot  e^{-z_v^P} \right)  + (1-\alpha) \cdot \frac{\LPopt}{2}.
\end{equation*}

Now, we bound the total expected cost of the algorithm. We have
    
\begin{align*}
   \mathbb{E}[\text{cost }&\text{of tour}] \leq \sum_{v\not\in S \cup T}\left(z_v^P + (1+\alpha) \cdot e^{-z_v^P} \right) + \left(\sum_{v \in S \cup T} z_v^P\right) + (1-\alpha) \cdot \frac{\LPopt}{2}\\
    &= \sum_{v\not\in S \cup T}\left(z_v^P + (1+\alpha) \cdot e^{-z_v^P} + (1-\alpha)\cdot \frac{z_v}{2}  \right) + \sum_{v \in S \cup T} \left(z_v^P + (1-\alpha)\cdot\frac{z_v}{2}\right)\\
    &\leq \sum_{v\not\in S \cup T} \left( \max\left\{(1+\alpha), \left(1+ \frac{1+\alpha}{e}\right)\right\}  \cdot z_v + (1-\alpha) \cdot \frac{z_v}{2} \right) + \sum_{v \in S \cup T} \left(z_v^P + (1-\alpha)\cdot\frac{z_v}{2}\right).
\end{align*}

In the last step, we exploited that $g(z_v^P) \coloneqq z_v^P + (1+\alpha) \cdot e^{-z_v^P}$ is a convex function and thus we have $g(z_v^P) \leq \max\{ g(0), g(1) z_v^P\}$ for $z_v^P \in \mathbb{R}_{\geq 0}$ and, additionally, we used that $z_v^P \leq z_v$ and $1 \leq z_v$.
This cost is minimized for $\alpha= \frac{1}{e-1}$. For this choice of $\alpha$, we consider the two sums in the last term separately. For the first sum, we obtain
\begin{align*}
    & \sum_{v\not\in S \cup T} \left( \max\left\{(1+\alpha), \left(1+ \frac{1+\alpha}{e}\right)\right\}  \cdot z_v + (1-\alpha) \cdot \frac{z_v}{2} \right) \\
    & \leq \sum_{v\not\in S \cup T} \left( \max\left\{1+\frac{1}{e-1}, 1+ \frac{1+\frac{1}{e-1}}{e}\right\} \cdot z_v + \left(1-\frac{1}{e-1}\right) \cdot \frac{z_v}{2} \right)\\
    & \leq \sum_{v\not\in S \cup T} \left( \max\left\{\frac{e}{e-1} , \frac{e}{e-1} \right\}  \cdot z_v + \frac{e-2}{2(e-1)} \cdot z_v \right)\\
    &= \sum_{v\not\in S \cup T}  \left(1+ \frac{e}{2e-2} \right) \cdot z_v.
\end{align*}
For the second sum, we calculate
\[\sum_{v \in S \cup T} \left(z_v^P + (1-\alpha)\cdot\frac{z_v}{2}\right) \leq \sum_{v \in S \cup T} \left(z_v + (1-\alpha)\cdot\frac{z_v}{2}\right) = \sum_{v \in S\cup T} \left(1 + \frac{e-2}{2e-2}\right)\cdot z_v.\]
Thus, adding up the two sums we obtain the following bound for the total cost of the tour
\begin{align*}
\mathbb{E}[\text{cost of tour}] &\leq \sum_{v\not\in S \cup T}  \left( 1+ \frac{e}{2e-2} \right) \cdot z_v + \sum_{v \in S\cup T} \left(1 + \frac{e-2}{2e-2}\right)\cdot z_v \\
& \leq \left(1 + \frac{e}{2e-2}\right) \sum_{v \in V} z_v = \left(1 + \frac{e}{2e-2}\right)\cdot \LPopt.
\end{align*}
\end{proof}

\section{Reducing Graphic Multi-Path TSP to Graphic Uncapacitated Multi-Depot VRP} \label{sec:reduction}
In this section, we utilize \cref{alg:main} to show that
a below-two approximation algorithm for the special case of Graphic Multi-Path TSP where 
$s_i=t_i$ for all $i \in [k]$ (which we term  \textbf{Graphic Uncapacitated Multi-Depot VRP})
gives rise to a below-two approximation algorithm for the Graphic Multi-Path TSP. 
To do so, given an instance of Graphic Multi-Path TSP we distinguish between two cases: If the total distance between all source-sink pairs is small, 
we create an instance of Graphic Uncapacitated Multi-Depot VRP, by ignoring the sink nodes of all commodities. We run the below-two approximation on the instance and then connect all $(s_i,t_i)$-pairs with shortest paths. On the other hand, if the total distance is large, we run \cref{alg:main} and show that in this case this algorithm already yields a below-two approximation.

\vrp*
%

\begin{proof}
Let $I$ be an instance of Graphic Multi-Path TSP consisting of $G=(V,E)$ and source-sink pairs $\{(s_1,t_1),\ldots, (s_k,t_k) \}$. We define the 
\textit{associated} instance $I' = (G, \{(s_1,s_1),\ldots, (s_k,s_k) \})$ to be the same as $I$, except that for 
each $i$, we set both terminals to be equal to $s_i$;
hence $I'$ is an instance of the Graphic Uncapacitated Multi-Depot VRP.
Also define $D(I) \coloneqq \sum_{i\in [k]} \dist(s_i,t_i)$ to be the sum of $s_i$-$t_i$ distances, which are defined as the minimal number of edges on any path connecting $s_i$ and $t_i$, i.e., $\dist(s_i,t_i) \coloneq \min \{c(P) \colon P \text{ is $s_i$-$t_i$-path}\} $. 

Observe that $|\text{OPT}(I)-\text{OPT}(I')| \le D(I)$, because if we take any solution for $I$ and we add shortest $s_i$-$t_i$ paths,
we get a solution for $I'$, and vice versa. 

Suppose $\text{ALG}_{\text{VRP}}$ is a $(2-\delta)$-approximation algorithm for Graphic Uncapacitated Multi-Depot VRP, for some constant $\delta>0$. 
Then, we can use \cref{alg:reduction} to obtain a $2-f(\delta)$-approximation for Graphic Multi-Path TSP.

Run \cref{alg:main} on $I$ to get candidate solution $F_1$. Also run $\text{ALG}_{\text{VRP}}$ on the associated instance $I'$, and add the shortest $s_i$-$t_i$ paths for all $i \in [k]$ to the output to get candidate solution $F_2$. 
We claim that the cost of the cheaper of $F_1$ and $F_2$ is at most 
$(2-f(\delta))\cdot \text{OPT}(I)$ for some strictly positive function $f$. 

\begin{algorithm}[t]
    \caption{Multi-Path TSP using Graphic Uncapacitated Multi-Depot VRP}
    \label{alg:reduction}
    \begin{algorithmic}[0] 
            \State \textbf{Input:} Graph $G=(V,E)$, tuples $\{(s_1, t_1), (s_2, t_2), \ldots, (s_k, t_k)\}$.
            \State \textbf{Output:} Collection $(F_i)_{i \in [k]}$ of walks covering all vertices, where $F_i$ is connecting $s_i$ to $t_i$ for all $i\in [k]$.
            \State Compute a solution $F_1$ for the bidirected version of $G$ and $\{(s_1, t_1), (s_2, t_2), \ldots, (s_k, t_k)\}$ using \Cref{alg:main}.
            \State Compute a solution $\widetilde F_2$ for $I' = (G=(V,E), \{(s_1,s_1),\ldots, (s_k,s_k) \})$ using the black box algorithm $\text{ALG}_{\text{VRP}}$.
            \State Add shortest $s_i$-$t_i$-path for all $i \in [k]$ to  $\widetilde F_2$ to obtain $F_2$.\\
        \Return $F$, which is the cheaper solution of $F_1$ and $F_2$.
    \end{algorithmic}
\end{algorithm}

We distinguish three cases depending on $D(I)$ and $\sum_{v\in S\cup T}z_v$. \\

\noindent \textbf{Case 1: $\sum_{v \in S\cup T} z_{v} \ge \frac{\delta}{16}\text{OPT}(I)$}.

In this case, we can bound the cost of solution $F_1$ obtained by \Cref{alg:main} by using the last line of \Cref{eq:cost_of_tour}
\begin{align*}
    \mathbb{E}[c(F_1)] \leq 2 \sum_{v \not\in S \cup T} z_{v} + \sum_{v \in S\cup T} z_{v} &= 2 \sum_{v \in V} z_{v} - \sum_{v \in S\cup T} z_{v} \le 2 \cdot \text{LP} - \frac{\delta}{16}\text{OPT}(I).
\end{align*}
\noindent \textbf{Case 2: $D(I)\le \frac{\delta}{4}\text{OPT}(I)$}. 

In this case, we bound the cost of the solution $F_2$ obtained from $\text{ALG}_{\text{VRP}}$ together with the cost of connecting the $(s_i,t_i)$-pairs. We have
\begin{align*}
    c(F_2) \le (2-\delta)\text{OPT}(I') + D(I) \le (2-\delta)(\text{OPT}(I)+D(I)) + D(I) \\ %
            < (2-\delta)\text{OPT}(I) + 3D(I) \le \left(2-\frac{\delta}{4}\right)\text{OPT}(I).
\end{align*}
\textbf{Case 3: $D(I) > \frac{\delta}{4}\text{OPT}(I)$}.

In this case, we analyze the cost of the solution $F_1$ obtained by \Cref{alg:main}. Note that instead of summing over all vertices, we can sum over all paths in the given path decomposition $P_j^{i}$ with corresponding weights $\lambda^i_j$.
Hence, for each $i \in [k]$, 
\[\sum_{v \in V} z_{i,v}^P \ge \sum_{j \in [\ell_i]} \lambda^i_j|P^i_j| \ge \sum_{j \in [\ell_i]} \lambda^i_j \dist(s_i,t_i) = \dist(s_i,t_i).\]
This means
\begin{equation*}\label{eq:delta4}
\sum_{v\in V} z_{v}^P \ge \sum_{i \in [k]}\dist(s_i,t_i) = D(I) \ge \frac{\delta}{4}\text{OPT}(I).
\end{equation*}
Define 
$V_P = \{  v\in V \setminus (S \cup T) \colon z^P_{v} \ge \frac{\delta}{16}\}.$ 
We distinguish two sub-cases depending on whether the set $V_P$ is large or small. \\

\noindent \textbf{Case 3.1: $|V_P| \ge \frac{\delta}{16}\text{OPT}(I)$. }

In this case, we first use that we can get a strictly better bound than $2z_v$ for all non-terminal vertices, where $z_v^P$ is bounded away from 0, i.e., for all vertices in $V_P$. 
For any vertex $v\in V_P$ we have $z_v^P \geq \frac{\delta}{16} >0$ and thus
\begin{align*}
z_v^P + 2e^{-z_v^P}&\le \begin{cases} 2 - f'(\delta), & \text{ if } z_v^P \leq 1,\\
z_v^P + \frac{2}{e}, & \text{ if } z_v^P \geq 1
\end{cases}\\ 
&\leq (2-f'(\delta))z_v
\end{align*}
for some strictly positive function $f'$. 

Hence, the total expected cost of $F_1$ is
\begin{align*}
 \mathbb{E}[c(F_1)] &\leq \sum_{v \not\in (S \cup T) } (z_v^P + 2\cdot e^{-z_v^P}) + \sum_{v \in S \cup T} z_v^P\\
 & \leq \sum_{v \not\in V_P} 2 z_v + \sum_{v \in V_P} (2-f'(\delta))z_v\\
  &\le 2\text{OPT}(I) -\frac{\delta}{16}\cdot \text{OPT}(I) \cdot f'(\delta)\\
&\le (2-f''(\delta))\cdot \text{OPT}(I). 
\end{align*}
Here we used the same estimation as in \eqref{eq:cost_of_tour}. Moreover, we exploited that $z_v \geq 1$ and that for all nodes in $V_P$, which are at least $\frac{\delta}{16}\text{OPT}(I)$ many, we will spend at least $f'(\delta)$ less than 2. \\

\noindent\textbf{Case 3.2: $|V_P| < \frac{\delta}{16}\text{OPT}(I)$} and \textbf{$\sum_{v \in S\cup T} z_{v} < \frac{\delta}{16}\text{OPT}(I)$}.

Intuitively, in this case we have large total weight on paths even though there are few vertices with large $z_v^P$. This means that there must be vertices with $z_v^P \ge 1$. To formalize this, we define 
$V_H = \{v\in V\setminus (S \cup T)\colon z_v^P\ge 1 \}$, and $H \coloneqq \sum_{v\in V_H} (z_v^P-1)$. 
We claim $H \ge \frac{\delta}{16}\text{OPT}(I)$.
We show this by contradiction. 
We can bound $D(I)$ as follows
\begin{align*}
D(I) &\leq \sum_{v\in V} z_{v}^P = \sum_{v\in (S\cup T)} z_{v}^P + \sum_{v\in V\setminus (S\cup T)} z_{v}^P \\ 
&< \frac{\delta}{16}\text{OPT}(I) + \sum_{v\in V\setminus (S\cup T\cup V_P)} z_{v}^P + \sum_{v\in V_P} z_{v}^P \\ 
& \le \frac{\delta}{16}\text{OPT}(I) + |V\setminus (S\cup T\cup V_P)| \cdot \frac{\delta}{16} + |V_P| + H\\
&< \frac{\delta}{16}\text{OPT}(I) +  \frac{\delta}{16}\text{OPT}(I) +\frac{\delta}{16}\text{OPT}(I) +\frac{\delta}{16}\text{OPT}(I)\\
& = \frac{\delta}{4}\text{OPT}(I), 
\end{align*}
which is a contradiction. 
Note that in the second inequality we used $\sum_{v \in S\cup T} z_{v}^P < \frac{\delta}{16}\text{OPT}(I)$ by assumption of this case and the fact that $z_v^P \le z_v$. Moreover, we exploited that $|V\setminus (S\cup T\cup V_P)| \leq n-k \leq \text{OPT}(I)$.

We now show that we can improve over $2\cdot \text{OPT}$ by the mass of $H$. First, observe that for $v \in V_H$
$$
z_v^P+2e^{-z_v^P}< z_v^P+1 \leq 2z_v -(z_v^P-1).$$
This means, again using the same estimations as in \eqref{eq:cost_of_tour} and $V_H \subseteq V \setminus (S \cup T)$, we have
\begin{align*}
    \mathbb{E}[c(F_1)] &= \sum_{v \not\in S \cup T} z_v^P + 2\cdot e^{-z_v^P} + \sum_{v \in S \cup T} z_v^P\\%
    &\leq \sum_{v \in V \setminus V_H} 2z_v + %
        \sum_{v \in V_H} \left( 2z_v -(z_v^P-1) \right)\\
    &\leq 2 \sum_{v \in V} \sum_{i \in [k]} z_{i,v} - H \leq \left(2-\frac{\delta}{16}\right)\text{OPT}(I).
\end{align*}


\end{proof}

\section{Conclusion}
We have shown that Algorithm~\ref{alg:main} is a $2$-approximation algorithm for Graphic Multi-Path TSP and that \LP has an integrality gap of at most $2$.
Our approach has not used the state-of-the-art algorithms for Graphic TSP.
While for a constant number of terminal pairs, the reduction of Traub and Vygen~\cite{Traub_Vygen_2024} leads to a $(1.4+\epsilon)$-approximation using the algorithm of Seb\H{o} and Vygen~\cite{sebHo2014shorter}, it is not obvious how these ideas can be applied if there is an arbitrary number of terminal pairs. We therefore leave as an open question to apply known techniques for Graphic TSP to the general Graphic Multi-Path TSP.

Notably, the worst case in our analysis appears when the sum of distances between the terminal pairs goes to zero. In \Cref{sec:reduction}, we show that an improvement over 2 for this special case, i.e., $s_i=t_i$ for all commodities $i \in [k]$, leads directly to an improvement over 2 for Graphic Multi-Path TSP. 

The reader may wonder if it is helpful to use branching decompositions~\cite{bang1995preserving} for Graphic Multi-Path TSP in the spirit of \cite{bohm2025approximating}. However, a straightforward use of such a decomposition leads to an approximation ratio worse than $2$ because all edges of a branching that are not located on the fundamental path between the terminals have to be doubled for parity correction and there is an additional cost for collecting uncovered vertices.

Finally, in \Cref{sec:ordered_TSP}, we use a very similar approach as in \Cref{alg:main} to provide a better than 1.791 approximation for Graphic Ordered TSP. 
\section*{Acknowledgment}
We would like to thank Nathan Klein for suggesting simplifications in the proof of Theorem~\ref{thm:main}.
We also thank the organizers of Santiago Summer Workshop on Combinatorial Optimization where this work started.
\bibliography{references}

\end{document}